\newtheorem{theorem}{Theorem}
\newtheorem{assumption}{Assumption}
\newtheorem{lemma}{Lemma}
\newtheorem{definition}{Definition}
\newtheorem{remark}{Remark}
\title{\LARGE \bf
	Cyber-Attack Detection in Discrete Nonlinear Multi-Agent Systems \\ Using Neural Networks
}
\author{Amirreza~Mousavi, Kiarash~Aryankia, and Rastko~R.~Selmic,~\IEEEmembership{Senior~Member,~IEEE}
	\thanks{A. Mousavi, K. Aryankia and R. R. Selmic are with the Department of Electrical and Computer Engineering,
		Concordia University, Montreal, QC, Canada.
		{\tt\small se\_usavi@encs.concordia.ca,     k\_aryank@encs.concordia.ca,
			\tt\small rastko.selmic@concordia.ca.}}}
\begin{document}
	\maketitle
	\thispagestyle{empty}
	\pagestyle{empty}

	\begin{abstract}
		This paper proposes a distributed cyber-attack detection method in communication channels for a class of discrete, nonlinear, heterogeneous, multi-agent systems that are controlled by our proposed formation-based controller. A residual-based detection system, exploiting a neural network (NN)-based observer, is developed to detect false data injection attacks on agents' communication channels. A Lyapunov function is used to derive the NN weights tuning law and the attack detectability threshold. The uniform ultimate boundedness (UUB) of the detector residual and formation error is proven based on the Lyapunov stability theory. The proposed method’s attack detectability properties are analyzed, and simulation results demonstrate the proposed detection methodology’s performance.

	\end{abstract}

	\section{Introduction}\label{section:1}

	Cyber-Physical Systems (CPSs) are the integration of computation units and communication networks with physical processes \cite{lee2008cyber}.  In recent years, much attention has been devoted to studying CPSs due to their modern engineering applications such as traffic networks\cite{xiong2015cyber},\cite{chen2017cyber}, smart grids \cite{macana2011survey},\cite{pasqualetti2013attack}, Internet of things, and autonomous multi-agent systems such as unmanned aerial vehicles (UAVs) and unmanned ground vehicles (UGV)  \cite{baheti2011cyber}.  Most of the aforementioned systems are connected to the Internet and wireless communication networks through communication channels that attackers can penetrate and change the transmitted data. Several cyber-attacks have been reported in recent years \cite{kriaa2012modeling}, \cite{case2016analysis} which can deteriorate physical systems' performance and ultimately lead to failures or unsafe behaviour. As a result, significant attention has been devoted to the study of the security of CPSs. Various cyber-attack detection methods have been proposed in the literature. In \cite{miao2016coding}, a sensor coding mechanism is used to detect stealthy data injection attacks, which is designed by an intelligent attacker with a system model knowledge. In \cite{boem2017distributed}, the problem of detecting cyber-attacks on the communication network between interconnected subsystems, governed by a consensus-based control, is investigated and a distributed residual-based attack detection method is proposed for detecting attacks on the neighbouring communication channels. In \cite{farivar2019artificial}, a strategy is proposed to estimate and compensate attacks in the forward link of a nonlinear CPS. The proposed method is using nonlinear control theory with applied neural networks to develop cyber-attack observers for multi-agents. In \cite{jin2017adaptive}, an adaptive framework is developed for the control design of cyber-physical systems in the presence of simultaneous adversarial sensor and actuator attacks. In \cite{ding2016observer}, an event-trigger consensus control for stochastic linear discrete-time multi-agent systems with lossy sensors and cyber-attacks is designed to guarantee the prescribed consensus, and a distributed observer has been developed to estimate the relative full states.

	In multi-agent systems, consensus and formation are two essential problems that researchers study. In consensus control, agents interact locally in order to reach a common value of a certain state \cite{olfati2007consensus}. The formation is defined as a configuration in a space, where each agent is at the desired distance or angle from its neighbours \cite{xiao2009finite}. In this paper, attack detection problem is addressed for formation of a discrete, nonlinear multi-agent system. In a formation control of multi-agent systems, it is necessary for agents to communicate with each other to achieve the formation objective \cite{oh2015survey}. However, these channels are vulnerable to cyber-attacks. By changing the channel data, agents receive the corrupted data. Receiving the attacked data by each agent violates the formation and increases the possibility of a collision in the system. Therefore, the security of the received data is of paramount importance. 

	In multi-agent systems, each agent has three types of communication channels: (i) actuator channel, which transfers the control signal from the controller to the plant; (ii) sensor channel, which transfers the system output (the sensor measurements) from the agent plant to the controller; and (iii) neighbouring channels through which each agent receives neighbours' data. These vulnerable communication channels are prone to cyber-physical attacks\cite{teixeira2010networked,boem2017distributed}.

	Some distributed methods for attack detection in multi-agent systems have been recently proposed \cite{khazraei2017replay}, \cite{pang2016two}, \cite{barboni2020detection},\cite{arrichiello2015observer}, \cite{huang2019reliable}, \cite{wu2017secure}. Most of the works for multi-agent systems consider linear or continuous models without the leader. Moreover, some methods assume that each agent knows the entire topology of the multi-agent system and requires the global model's knowledge.


	On the other hand, in this paper, we propose a method that requires knowledge of the local agent's model and locally available information or information communicated by neighbouring agents. We present cyber-attack detection methods in a discrete first-order nonlinear multi-agent system with unknown dynamics using an NN-based observer by using radial basis functions neural network (RBFNN). The proposed method is distributed, where the detection action in each agent relies on the agent's local information and the received data from its neighbours. Moreover, through Lyapunov stability analysis, a threshold is obtained for the proposed residual-based detector to detect agents' communication channels' attacks.

	The attack model that has been studied in this paper is a false data injection (FDI) attack on the actuator channel, sensor channel and neighbouring channels. We assume that the attacker does not have access to the agents' system dynamics. As a result, the covert attack cannot be applied to the system.

	The main contributions of this paper are as follows:
	\begin{enumerate}
		\item Development of an observer-based attack detection scheme as well as the detectability condition for a class of discrete, nonlinear multi-agent systems with unknown dynamics.
		\item Development of a distributed NN-based attack detection system and a NN-based controller for formation of discrete, nonlinear multi-agent systems with unknown dynamics.
		\item Demonstration of how the proposed system is capable of detecting attacks on actuator, sensor and neighbouring channels of multi-agent systems.
		
	\end{enumerate}

	This paper is organized as follows: in Section \ref{section:2}, the problem formulation and the attacks models are presented. Then, in Section \ref{section:3}, the NN-based observer and controller are developed. In Section \ref{section:4} the attack detectability condition is given, and in Section \ref{section:5} the performance of proposed cyber-attack detection system is demonstrated through the simulation results.

	\section{Preliminaries and Problem Formulation}\label{section:2}

	\subsection{Graph Theory}
	Interaction among the agents can be modelled by directed graph. This paper considers the weighted directed graph to define these interactions and communication graph which represents the information flow. Multi-agent systems can be modelled with a weighted directed graph\cite{mesbahi2010graph}. Let directed graph $\mathcal{D}$ be given by $\mathcal{D} = (\mathcal{V} ,\mathcal{E})$ where the set of vertices is $\mathcal{V} = \left\{ {{\upsilon _1},{\upsilon _2},...,{\upsilon _N}} \right\}$, and the set of edges is $\mathcal{E} \subseteq \mathcal{V} \times \mathcal{V}$. An edge in $\mathcal{D}$ is denoted by an ordered pair $(\upsilon_j, \upsilon_i)$ which is rooted at node $j$ and ended at node $i$. In the multi-agent system, the node $\upsilon _i$ denotes the \textit{i}-th agent, and  $(\upsilon_j, \upsilon_i) \in \mathcal{E}$, if and only if, agent $j$ sends information to agent $i$. The node $j$ is called a neighbour of node $i$ if $(\upsilon_j, \upsilon_i) \in \mathcal{E}$. The direction of an arrow in the directed graph matches the direction of information flow. 
	
	The adjacency matrix for a weighted directed graph is defined as $\mathcal{A} = [{a_{ij}}]$  where  $a_{ij} > 0$ when $(\upsilon_j, \upsilon_i) \in \mathcal{E}$, otherwise $a_{ij} = 0$. we consider, there is no self-loops e.g., $a_{ii} = 0$. The element of $a_{ij}$ is the weight between vertices $\upsilon _i$ and $\upsilon _j$. The $ d_{in}(\upsilon _i)$ is sum of in-degree weight to vertex $\upsilon _i$ which is defined as follows:
	\begin{equation}\label{eq:1}
	d_{in}(\upsilon _i) = \sum_{\{j|({\upsilon _j}{\upsilon _i}) \in \mathcal{E} \}} a_{ij}.
	\end{equation}
	The Degree matrix $\Delta \in \mathbb{R}^{N\times N}$ which is defined as $\Delta = diag(d_{in}(\upsilon _i)$), and in-degree weighted Laplacian is defined as $L = \Delta - \mathcal{A}$.
	
	\begin{definition}\label{def:1}
		The neighbour set of $i${-}th agent, \textbf{$\mathcal{N}_i = \{j|(\upsilon_j, \upsilon_i)\in \mathcal{E} \}$} is the set of agents that send their information to agent $i$.
	\end{definition}

	A  direct path from node $i$ to node $j$ is a sequence of successive edges in the form $\{(\upsilon_i, \upsilon_m), (\upsilon_m, \upsilon_l), . . . , (\upsilon_k, \upsilon_j)\}$. We use $\upsilon_l$ to annotate the leader of the multi-agent system.  The communication topology between the $N$ follower agents is assumed to be a directed graph with a fixed topology. The digraph is strongly connected, i.e. there is a directed path from $\upsilon_i$ to $\upsilon_j$ for all distinct nodes. Moreover, we assume that there is at least one directed path from the leader to one of the agents, i.e. the leader-follower structure contains a spanning tree where the leader has a directed path to all of the followers.

	In this paper, $x^{+}$ represents the state at time $t+1$, $tr(.)$ denotes the trace of matrix, and $\underline{1} \in \mathbb{R}^N$ is the vector of $1$'s. Moreover, $\bar{\sigma}(.)$ denotes the maximum singular value of a matrix and $\underline{\sigma}(.)$ denotes the minimum singular value of a matrix.  For any vector $x$ the notation $||x||$ denotes Euclidean norm and the Frobenius norm of any matrix $A$ is $||A||_F=\sqrt{tr(A^T A)}$.

	\subsection{System Model Without Attack}
	We consider a nonlinear discrete-time multi-agent system that consists of  $N$ agents. The dynamics of each agent is given by 
	\begin{equation} \label{eq:2}
	x^{+}_i = f_i(x_i) + u_i + w_i,\\ 
	\end{equation} 
	where $x_i \in \mathbb{R}^n$ is the system state, $u_i \in \mathbb{R}^n $ is the control input, and  $w \in \mathbb{R}^n$ is the structural disturbance. Functions $f_i(.): \mathbb{R}^n \rightarrow \mathbb{R}^n $ is locally Lipschitz nonlinear function. We also assume that the dynamics nonlinearity $f_i$ and disturbance $w_i$ are unknown. The overall system dynamics can be written as
	\begin{equation} \label{eq:3}
	\mathbf{x}^{+} = {f}(\mathbf{x}) + {u} + {w},\\ 
	\end{equation} 
	where the stacked state vector is $\mathbf{x}=[x^T_1, ...,x^T_N]^T \in \mathbb{R}^{nN}$, ${f}(\mathbf{x})= [f^T_1(x_1),...,f^T_N(x_N)]^T: \mathbb{R}^{nN} \rightarrow \mathbb{R}^{nN}$, control input  ${u} = [u^T_1, ...,u^T_N]^T\in \mathbb{R}^{nN}$, and ${w}=[w^T_1, ...,w^T_N]^T \in \mathbb{R}^{nN}$.

	\begin{assumption}\label{assumptio:1}
		The unknown disturbance $w$ is bounded by $||w|| \leq w_M$ with $w_M$ a fixed bound. 
	\end{assumption}
	
	The leader dynamics is defined as follows:
	\begin{equation} \label{eq:4}
	x^{+}_l = f_l(x_l),
	\end{equation}   
	where $f_l \in \mathbb{R}^n$, and should satisfy the following assumption.
	\begin{assumption}\label{assumptio:2}
		The leader dynamics $f_l(x_l )$ is bounded by $||f_l(x_l )|| \leq F_M$, with a fixed bound $F_M$.
	\end{assumption}
	
	\vspace{0.1 cm}
	The objective of each agent is to reach a desired relative position with its neighbouring agents. We designed a distributed control law is designed such that the relative position between neighbouring agents $i$ and $j$ converges to the bounded desired relative inter-agent displacement $d_{ij}$:
	\begin{equation}\label{eq:5}
	x_i − x_j \rightarrow d_{ij}. ~~~ i,j= 1, . . . , N
	\end{equation}
	
	We define the desired relative inter-agent displacement between agent $i$ and the leader as $d_i$, as a result we can represent $d_{ij} = d_i - d_j$, and define the tracking error between the agent $i$ and leader as
	\begin{equation}\label{eq:6}
	\delta_i  = x_l  - x_i  - d_i.
	\end{equation}
	


	The local formation error of $i$-th agent is defined as:
	\begin{equation}\label{eq:7}
	e_i  = \sum_{j\in \mathcal{I}_i}a_{ij}(x_j -x_i  - d_{ji}) + b_i(x_l -x_i -d_i),
	\end{equation}
	where $b_i$ is the direct gain from agent $i$ to the leader and $b_i \geq 0$, with $b_i > 0$ for at least one agent. Then $b_i \neq 0$ only if node $i$ can directly observe the state of the leader. Let matrix $B =diag(b_i)$. The global form of formation error is given as
	\begin{equation}\label{eq:8}
	e  = -[(L+B)\otimes I_n](\mathbf{x}  - \underline{1} \otimes {x}_l  - d),
	\end{equation}
	where $d = [d^T_1,...,d^T_N]^T \in \mathbb{R}^{nN}$. Since $b_i \neq 0$ for at least one agent, matrix $L+ B$ is full-rank and invertible. The desired relative position between agents and the leader $d$ is bounded by $||d|| < d_M$ with $d_M$ a fixed bound.
	
	By defining the stacked tracking error as $\delta = [\delta^T_1,..., \delta^T_N]^T$ the global synchronisation error can be also written as
	\begin{equation}\label{eq:9}
	e  = -[(L+B)\otimes I_n]\delta , 
	\end{equation}
	and the global synchronization error dynamics can be written as 
	\begin{equation}\label{eq:9-1}
	e^{+} = -[(L+B)\otimes I_n](f(\textbf{x}) + u + w - \underline{1} \otimes f(x_l) - d).
	\end{equation}
	\begin{lemma}[\hspace{-1.4mm}~\cite{das2011cooperative}]\label{lem11}
		Let the graph is strongly connected and $B \neq 0$. Then  $||\delta|| \leq ||e||/\underline{\sigma}(L+B)$.
	\end{lemma}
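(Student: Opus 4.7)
The plan is to invert the linear relation between $e$ and $\delta$ given in \eqref{eq:9} and bound the resulting operator by its smallest singular value.

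First I would verify that $L+B$ is nonsingular. Since the digraph is strongly connected and $B \neq 0$ (so at least one follower receives the leader's state directly), a standard result on leader-follower Laplacians (see, e.g., the reference \cite{das2011cooperative} already cited in the statement) guarantees that $L+B$ is a nonsingular M-matrix. Consequently $(L+B)\otimes I_n$ is also nonsingular, with its minimum singular value equal to $\underline{\sigma}(L+B)$, because the singular values of a Kronecker product $A\otimes C$ are the pairwise products $\sigma_i(A)\sigma_j(C)$, and here $\sigma_j(I_n)=1$.

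Next I would apply the standard inequality $\|Mv\| \ge \underline{\sigma}(M)\,\|v\|$, valid for any matrix $M$ and vector $v$, to the identity \eqref{eq:9}:
\begin{equation*}
\|e\| \;=\; \bigl\|[(L+B)\otimes I_n]\,\delta\bigr\| \;\ge\; \underline{\sigma}\bigl((L+B)\otimes I_n\bigr)\,\|\delta\| \;=\; \underline{\sigma}(L+B)\,\|\delta\|.
\end{equation*}
Rearranging yields $\|\delta\| \le \|e\|/\underline{\sigma}(L+B)$, which is the claim.

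There is no real obstacle here; the only nontrivial step is justifying $\underline{\sigma}(L+B)>0$, and that follows immediately from the strong connectivity assumption together with $B\neq 0$, which together ensure that the leader-follower structure contains a spanning tree rooted at the leader (as already stated in the graph-theory subsection). Everything else is a direct norm estimate.
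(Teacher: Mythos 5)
Your proof is correct and takes essentially the same route as the source: the paper gives no proof of its own, deferring to \cite{das2011cooperative}, where the bound is obtained exactly by inverting $e=-[(L+B)\otimes I_n]\delta$ and using the minimum singular value, with $\underline{\sigma}((L+B)\otimes I_n)=\underline{\sigma}(L+B)$ from the Kronecker structure. Your justification that $L+B$ is a nonsingular M-matrix is consistent with (and slightly more detailed than) the paper's own remark after equation (\ref{eq:8}) that $L+B$ is full-rank because $b_i\neq 0$ for at least one agent.
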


	\subsection{Radial Basis Function Neural Network}  
	A NN is used to approximate the unknown nonlinearity  $f_i(x_i)$ in equation (\ref{eq:2}) over a compact set $\Omega_i$ which can be written as: 
	\begin{equation}\label{eq:10}
	{f_i}({x}_i ) = {W}_i^{T}{ \varphi_i}({x}_i ) + \epsilon_i ,
	\end{equation}
	
	\noindent where the $W_i \in \mathbb{R}^{\vartheta_i \times n}$ is the desired constant unknown weight matrix of the NN, ${ \varphi_i}({x}_i)\in \mathbb{R}^{\vartheta_i \times n}$ is a NN activation function, and $\epsilon_i$ is the NN approximation error. Additional details on NNs can be found in \cite{lewis2020neural}. The approximation of the overall dynamics nonlinearity $f(\mathbf{x})$ can be written as
	\begin{equation}\label{eq:11}
	{f}({\mathbf{x} }) = {W}^{T}{ \varphi}(\mathbf{x} ) + \epsilon ,
	\end{equation}
	where the overall ideal NN weight matrix $W$ is defined as $W = diag(W_1,...,W_N)$, $\varphi(\mathbf{x}) = [\varphi_1^T(x_1),...,\varphi_N^T(x_N)]^T$, and $\epsilon = [\epsilon_1^T,...,\epsilon_N^T]^T$.
	
	The estimation of the nonlinearity in the dynamics of multi-agent systems using RBFNN is given as
	\begin{equation}\label{eq:12}
	\hat{f}_i({x}_i ) = \hat{W}_i^{T} { \varphi_i}({x}_i),
	\end{equation}

	\noindent where $\hat{W}_i$ is the current estimated NN weight matrix, and the used activation function $\varphi(x_i )$ is given by
	\begin{equation}\label{eq:13}
	\varphi(x_i) = exp[\frac{-(x_i-m_i)^T(x_i-m_i)}{p_i}],
	\end{equation}
	where $m_i$ is the center of the activation function, and $p_i$ is the width of the Gaussian function. The overall nonlinearity $f(\mathbf{x})$ over a compact set $\Omega$  can be estimated as follows:
	\begin{equation}\label{eq:14}
	\hat{f}(\mathbf{x} ) = \hat{W}^{T} { \varphi}(\mathbf{x} ),
	\end{equation}
	where the estimation of the ideal weight matrix $\hat{W}$ is defined as $\hat{W} = diag(\hat{W}_1,...,\hat{W}_N)$. We use some standard assumptions \cite{cui2015distributed,das2011cooperative, aryankia2020formation} as follows :
	\vspace{0.1 cm}
	\begin{assumption}\label{assumptio:3}
		Unknown ideal NN weight matrix $W$ is bounded by $||W||_F \leq W_M$, with a fixed bound $W_M$.
	\end{assumption}
	\vspace{0.1 cm}
	\begin{assumption}\label{assumptio:4}
		The NN approximation error $\epsilon $ is bounded by $||\epsilon || \leq \epsilon_M$ with a fixed bound $\epsilon_M$.
	\end{assumption}

	\begin{figure}[tb]\label{fig:1}
		\centering
		\includegraphics[scale=.55]{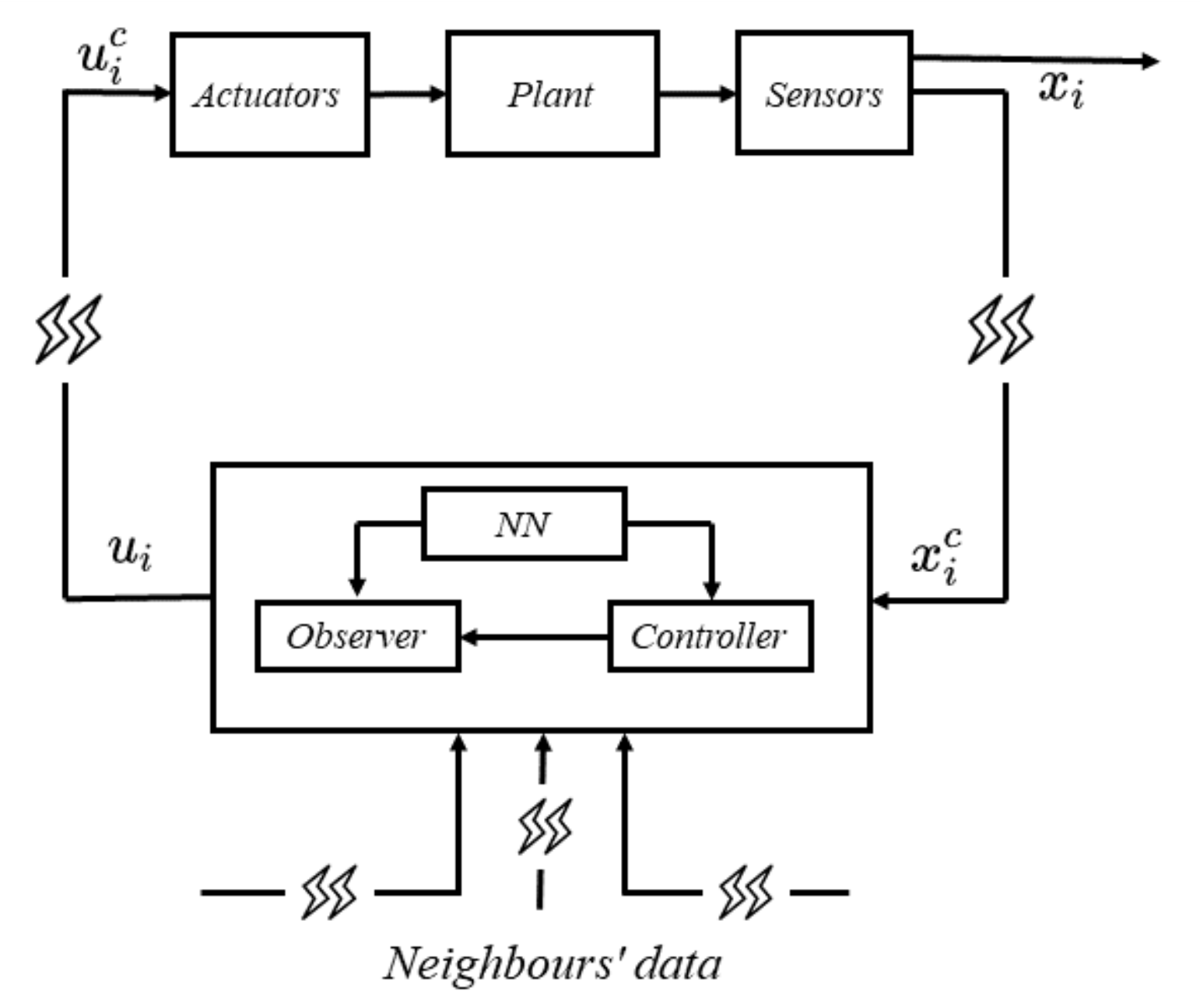}
		\caption{Agent system architecture.}
		\label{fig:universe}
	\end{figure}

	\subsection{Attacks Model}
	We assume that the attacker has access to the agent communication channels, and it can change the actuator, sensor and neighbouring channels data of the agents. 
	
	\textbf{Attack on actuator channel:} Each agent uses the actuator channel to send its control input to the plant, and the attacker can perform the attack on this channel and change the control input. The attack on the actuator channel can be modelled as   
	\begin{equation}\label{eq:15}
	u^c_i  = u_i  + \kappa_i  u^a_i ,
	\end{equation}
	where $u^c_i $ is the corrupted control input, and $\kappa_i$ is "0" in the attack-free case, and it is "1" when there is an attack in the actuator channel.
	
	\textbf{Attack on sensor channel:} The system output is sent to the controller through the sensor channel, and the attack can change the sensor data by injecting some false data into the sensor channel. We assume that the system's state is measurable by sensor \cite{modares2019resilient}. As a result, the attack on the sensor channel can be modelled as follows:
	\begin{equation}\label{eq:16}
	x^c_i  = x_i  + \lambda_i  x^a_i ,
	\end{equation}
	where $x^c_i $ is the corrupted sensor data, and $\lambda_i$ is $"0"$ when there is not an attack on sensor channel, otherwise it is $"1"$. So the attacker by corrupting the sensor channel and changing its data, affects the control input since the controller uses the sensor data.

	\textbf{Attack on neighbouring channel:} We consider that each agent's control input is a function of the agent's sensor data and its neighbouring data. If we define $\zeta_i$ as aggregated outputs of $i${-}th agent's neighbours, then the $i${-}th agent's control input is 
	\begin{equation}\label{eq:17}
	u_i  = \mathcal{F}_i(x_i ,\zeta_i ).
	\end{equation}   
	
	Let vector $x^a_j $ denote the injected attack signal into the neighbouring channel of agent $i$, where $j \in \mathcal{N}_i$. The neighbouring channel between agent $i$ and $j$ in presence of attack can be modelled as \cite{boem2017distributed}:
	\begin{equation}\label{eq:18}
	\bar{x}^c_j  = \bar{x}_j  + \phi^i_j  \bar{x}^a_j ,
	\end{equation}
	where $\bar{x}^c_j $ is the corrupted neighbouring data, and $\phi^i_j $ is $"1"$ when there is an attack on neighbouring channel, otherwise it is $"0"$.
	\begin{remark}
		The $\bar{x}_j $ is the sensor data that agent $j$  ($j \in \mathcal{N}_i$) sends to agent $i$ and it is equal to $x_j$.
	\end{remark}
	\vspace{0.2cm}
	
	When the attacker, by injecting $\bar{x}^a_j $, changes $\bar{x}_j$ to $\bar{x}^c_j$, the $\zeta_i$ is changed to ${\zeta}_i^c$. Therefore, we have the following expression for the corrupted control input of agent $i$
	\begin{equation}\label{eq:19}
	{u}_i^{\prime} =\mathcal{F}_i(x_i ,{\zeta}_i^c ),
	\end{equation}
	
	\noindent where ${u}_i^{\prime}$ is the $i$-th agent's control input, which has been affected by the attack on the agent's neighbouring channels.
	
	\begin{assumption}\label{assumptio:5}
		The attacks can compromise the communication channels between followers, but the leader's communication channel is safe, and the leader can send its information without getting distorted by any attack.
	\end{assumption}


	All types of attacks compromise the agent's control input, which can lead to degrading the control performance, distorting the formation and increasing the possibility of collision between agents.

	We propose a distributed residual-based attack detector to detect the aforementioned types of attacks. The attack detection system's objective is to enable each agent to detect attacks in its communication channels.  By exploiting the proposed detector, each agent can detect attacks on its sensor channel, actuator channel, and its neighbouring channels. 
	
	\section{Main Result}\label{section:3}
	
	For a distributed cyber-attack detection system, each agent has a dedicated NN-based observer that generates the residual signal. We propose the following observer to estimate the $i$-th agent’s states
	\begin{equation}\label{eq:20}
	\begin{split}
	\hat{x}^{+}_i = &\hat{W}_i^{T}\varphi_i({x}_i) + u_i - G_i(x_i - \hat{x}_i)\\ & - \sum_{j\in \mathcal{N}_i}a_{ij}(x_j-x_i - d_{ji}) + b_i(x_l-x_i-d_i),
	\end{split}
	\end{equation}
	where the diagonal matrix with nonnegative elements $G_i \in \mathbb{R}^{n \times n}$ is the observer gain.

	
	Moreover, based on the defined control objective, we propose the following distributed control
	\begin{equation}\label{eq:21}
	u_i  = -\hat{f}_i(x_i ) +  c(x_i + k_ie_i)  ,
	\end{equation}
	with the diagonal matrix with nonnegative elements $k_i\in\mathbb{R}^{n \times n}$ and scalar $ c >0$ are the control gains. By using the equation (\ref{eq:12}), the control law (\ref{eq:21}) can be rewritten as 
	\begin{equation}\label{eq:22}
	u_i  = -\hat{W}_i^{T} { \varphi_i}({x}_i )  +  c(x_i+ k_i e_i) ,
	\end{equation}
	or in stacked form
	\begin{equation}\label{eq:23}
	u  = -\hat{W}^{T} { \varphi}(\mathbf{x} )  + c(\mathbf{x}+ Ke) ,
	\end{equation}
	where $K = diag(k_1,...,k_N)$.
	
	By defining $\tilde{W}_i = W_i − \hat{W}_i$, the function estimation error is
	\begin{equation}\label{eq:23-1}
	\tilde{f}(x) = f(x) − \hat{f}(x) = \tilde{W}^T \varphi(x) + \epsilon,
	\end{equation}
	with $\tilde{W}^T = diag(\tilde{W}_i)$. Let the observer error $\tilde{x}_i = x_i - \hat{x}_i$ be the attack detection residual. Then, one can obtain the observer error dynamics as 
	\color{black}
	\begin{equation}\label{eq:24}
	\Tilde{x}_i ^{+} = G_i\Tilde{x}_i + \Tilde{W}_i^{T} \varphi(x_i ) + \epsilon_i  + w_i + e_i.
	\end{equation}
	The stacked observer error is given by
	\begin{equation}\label{eq:25}
	\tilde{\mathbf{x}} ^{+} = G \tilde{\mathbf{x}}  + \tilde{W}^T \varphi(\mathbf{x} ) + w  +\epsilon  + e ,
	\end{equation}
	where $G = diag(G_1,...,G_N) \in \mathbb{R}^{nN\times nN}$.

	Let us consider the NN weight matrix tuning law as
	
	\begin{equation}\label{eq:26}
	\begin{split}
	\hat{W}_i ^{+} = &\hat{W}_i  + \alpha\varphi(x_i )\Bar{h}_i^T- F_i\hat{W}_i,
	\end{split}
	\end{equation}
	where scalar $\alpha>0$, $F_i = \gamma I_{\vartheta_i}$ with $0 < \gamma < 1$, and $I_{\vartheta_i }$ denotes $\vartheta_i \times \vartheta_i$ identity matrix, and
	\begin{equation}\label{eq:27}
	\small
	\Bar{h}_i = \Tilde{W}_i^{T} \varphi(x_i) + \epsilon_i  + w_i.
	\end{equation}
	
	
	
	\begin{definition} [\hspace{-1.4mm}~\cite{sarangapani2018neural}]
		Consider the following nonlinear system
		\begin{equation}
		x^{+} = F(x,t),
		\end{equation}
		where $x$ denotes the system state, and $F$ is a nonlinear function. Let the initial time be $t_0$, and the initial condition be $x(t_0) = x_0$.
		The solution is said to be uniformly ultimately bounded (UUB) if, for all initial states $x_0$, there exists a $b \in \mathbb{R}$ and an $\mathcal{T}_f(b,x_0) \in \mathbb{Z}^{+}$ such that $||x|| \leq b$ for all $t \ge t_0 + \mathcal{T}_f$. 
	\end{definition}

	\begin{theorem}\label{theorem:1}
		Consider a multi-agent system in the absence of attack to be modelled by a weighted undirected graph, for the class of nonlinear systems described by (\ref{eq:2}),(\ref{eq:3}), under the Assumptions 1-4, with control law (\ref{eq:22}), and the NN weights matrix tuning law as (\ref{eq:26}).  If the following conditions hold
		\begin{equation}\label{eq:28}
		0 < \bar{\sigma}(K) < \frac{1}{\bar{\sigma}(\bar{L})},
		\end{equation}
		\begin{equation}\label{eq:29}
		0 < c < \frac{1}{\sqrt{\eta{\bar\sigma}(P^TP)}},
		\end{equation}
		\begin{equation}\label{eq:30}
		\alpha < \frac{1}{||\varphi(x )||^2},
		\end{equation}
		with $\bar{L} = (L+ B) \otimes I_{n}$, $\eta = 1 + (1-\alpha\varphi^T\varphi)^{-1}$, and $P = I - K\bar{L}$, then the NN weight matrix estimation errors $\tilde{W}$ and formation error $e$ are UUB, with practical bounds given by
		(\ref{eq:62}) and (\ref{eq:63}) respectively.
	\end{theorem}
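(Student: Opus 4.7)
The plan is to construct a composite Lyapunov function on the stacked formation error and weight estimation error, push through a discrete-time Lyapunov difference analysis, and then read off UUB bounds from the sign condition $\Delta V < 0$ outside a compact set. First I would substitute the control law (\ref{eq:23}) into the global formation error dynamics (\ref{eq:9-1}) to obtain a closed-loop recursion of the form $e^{+} = c\,P\,e - \bar{L}\tilde{W}^{T}\varphi(\mathbf{x}) - \bar{L}\epsilon - \bar{L}w + \bar{L}\psi_l$, where $\psi_l$ collects the leader/desired-offset terms $\underline{1}\otimes f_l(x_l) - c\,\underline{1}\otimes x_l + (1-c)d$, and $P = I - K\bar{L}$ (a bit of algebra using $\mathbf{x} - \underline{1}\otimes x_l - d = -\bar{L}^{-1}e$ exposes the structure $cPe$). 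Assumptions 1, 2, 4 together with boundedness of $d$ guarantee $\|\psi_l\|$ and $\|\epsilon\|$ are uniformly bounded, so the recursion for $e$ reduces to a linear part $cPe$ plus a term driven by $\tilde{W}$ plus a bounded residual.

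Next I would derive the tuning error recursion by substituting (\ref{eq:26}) into $\tilde{W}^{+} = W - \hat{W}^{+}$, giving $\tilde{W}^{+} = (I - F)\tilde{W} - \alpha\varphi(\mathbf{x})\bar{h}^{T} + FW$, and expanding $\bar{h}$ using (\ref{eq:27}). With $F = \gamma I$ and Assumption 3, the $FW$ term is bounded by $\gamma W_M$. Then I would propose the candidate Lyapunov function
\begin{equation*}
V(e,\tilde{W}) = e^{T}e + \tfrac{1}{\alpha}\,\mathrm{tr}\!\left(\tilde{W}^{T}\tilde{W}\right),
\end{equation*}
compute $\Delta V = V(e^{+},\tilde{W}^{+}) - V(e,\tilde{W})$ separately for the two pieces, and use the cyclicity of the trace together with the identity $\mathrm{tr}(\varphi\bar{h}^{T}\tilde{W}) = \bar{h}^{T}\tilde{W}^{T}\varphi$ to collect a dominant negative-definite term $-(1-\alpha\varphi^{T}\varphi)\|\bar{h}\|^{2}$ from the weight part; the assumption $\alpha < 1/\|\varphi(x)\|^{2}$ is precisely what keeps this coefficient positive and defines the constant $\eta = 1+(1-\alpha\varphi^{T}\varphi)^{-1}$.

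For the formation-error part of $\Delta V$, I would apply the bound $\|cPe\|^{2} \le c^{2}\bar{\sigma}(P^{T}P)\|e\|^{2}$, use Young's inequality (with weight $\eta$ chosen as above) to split the cross products between $cPe$ and the $\tilde{W}$/disturbance terms, and then invoke condition (\ref{eq:29}) $c^{2}\bar{\sigma}(P^{T}P) < 1/\eta$ to guarantee that the $\|e\|^{2}$ coefficient in $\Delta V$ is strictly negative. Condition (\ref{eq:28}) is used implicitly here to make $P$ a well-behaved contraction-like matrix (equivalently, it controls $\bar{\sigma}(P^{T}P)$ so that (\ref{eq:29}) is non-vacuous). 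Combining both contributions yields an inequality of the shape $\Delta V \le -\alpha_{1}\|e\|^{2} - \alpha_{2}\|\tilde{W}\|_{F}^{2} + \beta$ with positive $\alpha_{1},\alpha_{2}$ and a finite $\beta$ assembled from $w_M$, $\epsilon_M$, $W_M$, $F_M$, $d_M$, and $\gamma$. Standard discrete-time Lyapunov arguments then give UUB and, after solving $\Delta V \le 0$ for equality, the explicit ultimate bounds claimed in (\ref{eq:62}) and (\ref{eq:63}).

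I expect the principal obstacle to be the bookkeeping of cross-terms when expanding $\mathrm{tr}((\tilde{W}^{+})^{T}\tilde{W}^{+})$: the product $\alpha\varphi\bar{h}^{T}$ couples $\tilde{W}$ with $\epsilon$ and $w$, and the $\sigma$-modification-like term $(I-F)\tilde{W} + FW$ introduces additional quadratic contributions in $\|W\|_{F}$ that must be absorbed carefully using Assumption 3, rather than dominated by the leading negative term. A secondary technical point is ensuring that the $e$-dependent term entering the observer error dynamics (\ref{eq:25}) does not need to be separately handled here—in the attack-free setting the theorem only claims UUB of $(e,\tilde{W})$, so the observer error analysis can be deferred; but care is needed that the formation error coupling into $\bar{h}$ through $\bar{h} = \tilde{W}^{T}\varphi + \epsilon + w$ does not reintroduce a dependence on $\tilde{x}$ via the tuning law, which is why (\ref{eq:27}) is defined with $\epsilon + w$ rather than with $\tilde{x}$.
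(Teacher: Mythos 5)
Your proposal follows essentially the same route as the paper's Appendix proof: a composite Lyapunov candidate in $(e,\tilde{W})$, substitution of the control law into the global error dynamics \eqref{eq:9-1} to expose a linear part $cPe$ plus terms driven by $\tilde{W}^{T}\varphi$ and bounded residuals, a first-difference computation, completion of squares, and the standard Lyapunov extension theorem to conclude UUB with the bounds \eqref{eq:62}--\eqref{eq:63}. Your reading of where each hypothesis enters (condition \eqref{eq:30} keeping $1-\alpha\varphi^{T}\varphi>0$, condition \eqref{eq:29} making the $\|e\|^{2}$ coefficient negative, and the decoupling of the tuning law from $\tilde{x}$ via \eqref{eq:27}) matches the paper.

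The one substantive deviation is your choice $V=e^{T}e+\tfrac{1}{\alpha}\mathrm{tr}(\tilde{W}^{T}\tilde{W})$: the paper instead weights the first term as $\tfrac{1}{\bar{\sigma}(\bar{L}^{T}\bar{L})}e^{T}e$ (equation \eqref{eq:56}), and this weight is not cosmetic. Since $e^{+}$ contains $-\bar{L}\,\theta$ with $\theta=\tilde{W}^{T}\varphi(\mathbf{x})$, the difference $(e^{+})^{T}e^{+}-e^{T}e$ produces a positive term $\theta^{T}\bar{L}^{T}\bar{L}\theta$, which can be as large as $\bar{\sigma}(\bar{L}^{T}\bar{L})\|\theta\|^{2}$, while the weight-update part of $\Delta V$ only supplies $-(2-\alpha\varphi^{T}\varphi)\|\theta\|^{2}$. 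When $\bar{\sigma}(\bar{L}^{T}\bar{L})>1$ (the generic case for $L+B$), the unweighted candidate leaves a net positive $\theta$-quadratic and the sign argument collapses; the $1/\bar{\sigma}(\bar{L}^{T}\bar{L})$ factor is exactly what caps the contribution at $\|\theta\|^{2}$ so it is absorbed by the negative term from the tuning law, as seen in the passage from \eqref{eq:57} to \eqref{eq:58}. Your Young's-inequality splitting does not by itself repair this, because the problem is a quadratic (not cross) term in $\theta$. Either adopt the paper's weighting or scale the trace term up by $\bar{\sigma}(\bar{L}^{T}\bar{L})$; otherwise the argument is sound and the remaining differences (e.g., writing $c(I-K\bar{L})e$ versus $c(I-\bar{L}K)e$, and completing squares in $\bar{h}$ rather than in $\theta$) are bookkeeping.
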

	
	\begin{proof}
		See Appendix.
	\end{proof}
	
	The system reaches its desired formation if no attack is injected into the system. We propose here a method to detect attacks. As a result, the following theorem is given.

	\begin{theorem}\label{theorem:2}
		
		Consider a multi-agent system that has reached the desired formation in the absence of an attack with the observer (\ref{eq:20}). If the following condition holds
		
		\begin{equation}\label{eq:31}
		\bar{\sigma}(G) < \frac{1}{\sqrt{\eta}},
		\end{equation}
		then, the observer error $\tilde{x}_i$ is UUB, with practical bounds given by (\ref{eq:43}).
	\end{theorem}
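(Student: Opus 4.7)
The plan is to build a discrete-time Lyapunov argument on the stacked observer-error dynamics \eqref{eq:25} and exploit the fact that, by Theorem~\ref{theorem:1}, under the no-attack setting all other signals entering that recursion are already UUB. Concretely, I would take the candidate $V_k = \tilde{\mathbf{x}}^{T}\tilde{\mathbf{x}}$ and write
\[
\tilde{\mathbf{x}}^{+} = G\tilde{\mathbf{x}} + h, \qquad h := \tilde{W}^{T}\varphi(\mathbf{x}) + w + \epsilon + e ,
\]
so that $V_{k+1} = \tilde{\mathbf{x}}^{T}G^{T}G\tilde{\mathbf{x}} + 2\tilde{\mathbf{x}}^{T}G^{T}h + h^{T}h$.

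Next I would split the cross term with Young's inequality, $(a+b)^{T}(a+b)\le(1+\beta)a^{T}a + (1+\beta^{-1})b^{T}b$, choosing $\beta = (1-\alpha\varphi^{T}\varphi)^{-1}$ so that $1+\beta$ matches the constant $\eta$ defined in Theorem~\ref{theorem:1}. This yields
\[
V_{k+1} \le \eta\,\tilde{\mathbf{x}}^{T}G^{T}G\tilde{\mathbf{x}} + (2-\alpha\varphi^{T}\varphi)\,h^{T}h,
\]
and hence
\[
\Delta V_k \le -\bigl(1 - \eta\,\bar{\sigma}(G^{T}G)\bigr)\|\tilde{\mathbf{x}}\|^{2} + (2-\alpha\varphi^{T}\varphi)\|h\|^{2}.
\]
Condition~\eqref{eq:31} is precisely what makes the coefficient of $\|\tilde{\mathbf{x}}\|^{2}$ strictly negative, so the first term acts as the needed negative-definite ``drive'' of the discrete Lyapunov difference.

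The remaining task is to produce a fixed upper bound $h_M$ on $\|h\|$. Here I would invoke the earlier ingredients in order: $\|w\|\le w_M$ by Assumption~\ref{assumptio:1}; $\|\epsilon\|\le\epsilon_M$ by Assumption~\ref{assumptio:4}; $\|\varphi(\mathbf{x})\|$ is bounded because the Gaussian RBF activations \eqref{eq:13} are uniformly bounded; and, crucially, both $\|\tilde{W}\|_F$ and $\|e\|$ are UUB by Theorem~\ref{theorem:1}. Combining these gives $\|h\|\le h_M$ for some fixed $h_M$ after the UUB-transient time. Substituting into the bound on $\Delta V_k$ shows $\Delta V_k < 0$ whenever
\[
\|\tilde{\mathbf{x}}\|^{2} > \frac{(2-\alpha\varphi^{T}\varphi)\,h_M^{2}}{1-\eta\,\bar{\sigma}(G^{T}G)},
\]
which by the standard discrete-time Lyapunov UUB lemma (as in the cited reference used for the Definition above) delivers UUB of $\tilde{\mathbf{x}}$, and therefore of each $\tilde{x}_i$, with the practical bound stated in \eqref{eq:43}.

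The main obstacle I expect is not the Lyapunov manipulation itself, which is routine, but the careful bookkeeping on the lumped signal $h$: one must ensure that the bound on $e$ used here is the UUB bound from Theorem~\ref{theorem:1} evaluated under the same attack-free assumptions, and that the bound on $\tilde{W}$ is likewise inherited from the tuning law \eqref{eq:26} rather than reproved, otherwise the argument becomes circular. Getting the Young's-inequality parameter to match the paper's definition of $\eta$ exactly is the secondary subtlety—any other choice would give a different, looser bound and would not reproduce the clean threshold $\bar{\sigma}(G)<1/\sqrt{\eta}$.
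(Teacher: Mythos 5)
Your proof is essentially sound as an argument for UUB, but it takes a genuinely different route from the paper. The paper's proof uses the composite Lyapunov function $V=\tilde{\mathbf{x}}^{T}\tilde{\mathbf{x}}+\frac{1}{\alpha}tr(\tilde{W}^{T}\tilde{W})$, explicitly propagates the weight-error dynamics through the tuning law (\ref{eq:26}), and completes squares in $\theta=\tilde{W}^{T}\varphi$ and in $\tilde{W}$; the factor $\eta$ then emerges organically because the square in $\theta$ releases an extra $\frac{1}{1-\alpha\varphi^{T}\varphi}G^{T}G$ term that adds to the existing $G^{T}G$. You instead keep only $V=\tilde{\mathbf{x}}^{T}\tilde{\mathbf{x}}$, lump $\tilde{W}^{T}\varphi+w+\epsilon+e$ into a single disturbance $h$ whose boundedness you inherit from Theorem~\ref{theorem:1}, and recover $\eta$ by a reverse-engineered choice of the Young's-inequality parameter. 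Your version is shorter and correctly avoids circularity (the premise that formation has been reached puts you past the transients of $e$ and $\tilde{W}$), but it buys less: because Young's inequality discards the sign structure of the cross term $2\tilde{\mathbf{x}}^{T}G^{T}h$, your ultimate bound is the crude $\|\tilde{\mathbf{x}}\|^{2}>(2-\alpha\varphi^{T}\varphi)h_{M}^{2}/(1-\eta\bar{\sigma}^{2}(G))$ rather than the quadratic-formula threshold (\ref{eq:43}) built from $\rho_{1}$ and $\rho_{2}$, which the paper obtains by retaining the term linear in $\|\tilde{\mathbf{x}}\|$ and solving the resulting quadratic. This is not a cosmetic difference here, since (\ref{eq:43}) is exactly the detection threshold $\pi$ used in Section~\ref{section:4}; your last sentence claiming to reproduce (\ref{eq:43}) therefore overstates what your argument delivers, and you would either need to redo the cross-term bookkeeping as the paper does or accept a more conservative threshold. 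A secondary observation: in your framework the specific choice $\beta=(1-\alpha\varphi^{T}\varphi)^{-1}$ has no intrinsic motivation (any $\beta>0$ would give some UUB statement), whereas in the paper that constant is forced by the weight-tuning analysis.
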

	\color{black}
	\begin{proof}
		Consider the desired formation is achieved if there is no injected attack to the multi-agent system ($||e||\le e_M$). Let us define the following Lyapunov candidate 
		\begin{equation}\label{eq:32}
		V = V_1 + V_2, 
		\end{equation}
		where  $V_1 =  \tilde{x}^T\tilde{x}$, and $V_2= \frac{1}{\alpha}tr(\tilde{W}^T\tilde{W})$. The first difference of Lyapunov candidate is given by
		\begin{equation}\label{eq:33}
		\Delta V = \Delta V_1 + \Delta V_2.
		\end{equation}
		Let us define $\mu  = w  + \epsilon $ and $\theta  = \tilde{W}^T \varphi(x )$. Based on Assumption 1 and Assumption 4, $\mu $ is bounded by
		\begin{equation}\label{eq:34}
		||\mu || \leq \mu_M,
		\end{equation}
		where $\mu_M = \epsilon_m + w_M$. One can derive $\Delta V_1$ by using (\ref{eq:25}) as follows:
		\begin{equation}\label{eq:35}
		\begin{split}
		\Delta V_1 =& ({\tilde{\mathbf{x}}}^{+})^T \tilde{\mathbf{x}} ^{+} - \tilde{\mathbf{x}}^T \tilde{\mathbf{x}} =
		-\tilde{\mathbf{x}}^T[I - G^TG]\tilde{\mathbf{x}}  \\&+2\tilde{\mathbf{x}}^TG^T\theta +2\theta^Te +2\tilde{\mathbf{x}}^TG^T\mu +2\tilde{x}G^Te\\& + \theta^T\theta+2\theta^T\mu+
		\mu^T\mu + e^Te + 2\mu^Te.
		\end{split}
		\end{equation}
		By defining $F = diag(F_i)$, we use (\ref{eq:26}) to obtain $\Delta V_2$ 
		\begin{equation}\label{eq:36}
		\small
		\begin{split}
		\Delta V_2 =&  \frac{1}{\alpha}tr[{(\tilde{W}^{+}})^T \tilde{W} ^{+}- \tilde{W}^T \tilde{W} ] = \\&
		\frac{1}{\alpha}tr[-2\alpha\tilde{W}^T\varphi\varphi^T\tilde{W}- 
		2\alpha\mu\varphi^T{\tilde{W}} +2\gamma\tilde{W}^T\hat{W}B
		\\&+\alpha^2\tilde{W}^T\varphi\varphi^T\varphi\varphi^T\tilde{W} +2\alpha^2\tilde{W}\varphi\varphi^T\varphi\mu^T\\&-2\alpha\tilde{W}\varphi\varphi^TF\hat{W}+\alpha^2\mu\varphi^T\varphi\mu^T-2\alpha\mu\varphi^TF\hat{W}\\&+\hat{W}^TF^TF\hat{W}].
		\end{split}
		\end{equation}

		Thus, from (\ref{eq:35}) and (\ref{eq:36}) one we can write $\Delta V$ as:
		\begin{equation}\label{eq:37}
		\small
		\begin{split}
		\Delta V =& 2\theta^TG\tilde{\mathbf{x}} - \tilde{\mathbf{x}}^T[I - G^TG]\tilde{\mathbf{x}} + 2\theta^Te+ 2\tilde{x}G^Te+2\theta^T\mu \\& +2\tilde{\mathbf{x}}^TG^T\mu+ 2\mu^Te+ \mu^T\mu+ e^Te +\\& \frac{1}{\alpha}tr [-2\alpha\tilde{W}^T\varphi\varphi^T\tilde{W}- 2\alpha\mu\varphi^T{\tilde{W}}  +2\gamma\tilde{W}^T\hat{W}B+\\&\alpha^2\tilde{W}^T\varphi\varphi^T\varphi\varphi^T\tilde{W} +2\alpha^2\tilde{W}\varphi\varphi^T\varphi\mu^T-2\alpha\tilde{W}\varphi\varphi^TF\hat{W}\\&+\alpha^2\mu\varphi^T\varphi\mu^T-2\alpha\mu\varphi^TF\hat{W}+\hat{W}^TF^TF\hat{W}],
		\end{split}
		\end{equation}


		\noindent recognizing the terms in (\ref{eq:37}) yields
		\begin{equation}\label{eq:38}
		\small
		\begin{split}
		\Delta V \le & -(1 - \alpha\varphi^T\varphi)\theta^T\theta - 2(1 - \alpha\varphi^T\varphi)\theta^T\mu + 2\theta^Te\\& + 2\theta^T\mu+2\theta^TG\tilde{\mathbf{x}} - \tilde{\mathbf{x}}^T[I - G^TG]\tilde{\mathbf{x}} + 2\tilde{x}^TG^Te \\&+2\tilde{\mathbf{x}}^TG^T\mu+\mu^T\mu 
		+\alpha\varphi^T\varphi\mu^T\mu +e^Te + 2\mu^Te+ \\&\frac{1}{\alpha} tr[ \gamma ^2\hat{W}^T\hat{W} +2\gamma  {\tilde{W}}^T(W - \tilde{W})+2\alpha\gamma   \mu\varphi^T(\tilde{W} - W)].
		\end{split}
		\end{equation}
		Completing the squares for $\theta$, one can obtain
		\begin{equation}\label{eq:39}
		\small
		\begin{split}
		\Delta V \leq & -(1-\alpha\varphi^T\varphi)||\theta - \frac{1}{1-\alpha\varphi^T\varphi}G\tilde{\mathbf{x}}- \frac{\gamma +\alpha\varphi^T\varphi }{1-\alpha\varphi^T\varphi}\mu\\&-\frac{1}{1-\alpha\varphi^T\varphi}e||^2 - \tilde{\mathbf{x}}^T\Big[I - (1+ \frac{1}{1-\alpha\varphi^T\varphi})G^TG \Big]\tilde{\mathbf{x}}\\&  + 2(\frac{\gamma  +1}{1-\alpha\varphi^T\varphi})\tilde{x}^TG^T\mu + 2\eta\tilde{x}^TG^Te
		+(-2\gamma\\& + \frac{(1+\gamma  )^2}{1-\alpha\varphi^T\varphi})\mu^T\mu + 2(\frac{\gamma+1}{1-\alpha\varphi^T\varphi})\mu^Te + \eta e^Te\\& - \frac{1}{\alpha}[\gamma (2-\gamma )||\tilde{W}||_F^2-2\gamma (1-\gamma )W_M||\tilde{W}||_F-\gamma ^2W^2_M]\\& + 2\gamma  ||\varphi||W_M{\mu_M}.
		\end{split}
		\end{equation}
		
		Completing the squares for $\tilde{W}$, and considering inequality (\ref{eq:39}), and the fact that $||\varphi(x )|| \leq  \varphi_M$, one can have
		\begin{equation}\label{eq:40}
		\small
		\begin{split}
		\Delta V \leq & -(1-\alpha\varphi^T\varphi)|
		|\theta - \frac{1}{1-\alpha\varphi^T\varphi}G\tilde{\mathbf{x}}-\frac{\gamma-\alpha\varphi^T\varphi}{1-\alpha\varphi^T\varphi}\mu \\&-\frac{1}{1-\alpha\varphi^T\varphi}e
		||^2 -(1-\eta\bar{\sigma}^2(G))\bigg[||\tilde{\mathbf{x}}||^2
		\\&-\frac{1}{1-\eta\bar{\sigma}^2(G)} (2\rho_1||\tilde{\mathbf{x}}|| +\rho_2) \bigg]-\frac{1}{\alpha}\gamma (2-\gamma )
		\\&\bigg[||\tilde{W}||_F -\frac{1-\gamma }{2-\gamma }W_M  \bigg]^2,
		\end{split}
		\end{equation}
		where 
		\begin{equation}\label{eq:41}
		\begin{split}
		\rho_1 = & (\frac{\gamma +1}{1-\alpha\varphi_M^2})\bar{\sigma}(G)\mu_M+ \eta \bar{\sigma}(G)e_M,
		\end{split}
		\end{equation}
		where $e_M$ is derived using the result of Theorem 1 and is given as (\ref{eq:62}) in the Appendix. Moreover,
		\begin{equation}\label{eq:42}
		\small
		\begin{split}
		\rho_2 = & 2\gamma \mu_{M} W_M  +\frac{1}{\alpha}\frac{\gamma }{2-\gamma }W^2_M+ (-2\gamma  + \frac{(1+\gamma )^2}{1-\alpha\varphi_M^2})\mu^2_M\\& +2 \mu_M (\frac{\gamma+1}{1-\alpha\varphi_M^2})e_M + \eta e_M^2.
		\end{split}
		\end{equation}

		Then $\Delta V < 0$ as long as inequality (\ref{eq:31}), hold, and the quadratic term for $\tilde{x}$ in (\ref{eq:40}) is positive which is guaranteed when
		\begin{equation}\label{eq:43}
		\begin{split}
		||\tilde{\mathbf{x}}|| \geq & \frac{\rho_1+\sqrt{\rho^2_1+(1-\eta\bar{\sigma}^2(G))\rho_2}}{(1-\eta\bar{\sigma}^2(G))}  \triangleq \pi.
		\end{split}
		\end{equation}
		In general $\Delta V \leq 0$ in a compact set as long as (\ref{eq:31}) is satisfied and (\ref{eq:43}) holds. According to a standard Lyapunov extension theorem \cite{sarangapani2018neural}, this demonstrates that the observer error is UUB.
	\end{proof}


	Theorem 2 shows that when the multi-agent system reaches the desired formation, in the absence of attacks, the observer error is UUB, with bound $\pi$ (Theorem 2.4.6 in \cite{sarangapani2018neural}). That means in attack-free condition $||\tilde{x}|| < \pi$. By using this fact in attack detection, we consider the observer error as the attack residual with threshold $\pi$. 
	

	\section{Attack detectability condition}\label{section:4}

	

	The scalar $\pi$ is a threshold for the stacked observer error of a multi-agent system. Using the norm inequality property $||x||_{\infty} \leq ||x||_2$, the scalar $\pi$  given by (\ref{eq:43}) is each agent's observer error's threshold. The residual dynamics under attack can be written as:
	\begin{equation} \label{eq:45}
	\Tilde{x}_i ^{+} = G_i\Tilde{x}_i + \Tilde{W}_i^{T} \varphi(x_i ) + \epsilon_i  + w_i + e_i + s_i,
	\end{equation} 
	where, $s_i$ is the overall attacks effect  on residual of $i$-th agent. From equations (\ref{eq:15})-(\ref{eq:18}), (\ref{eq:20}) and (\ref{eq:21}) the overall attacks effect $s_i $ is given as
	\begin{equation}\label{eq:46}
	\begin{split}
	s_i  =  &\kappa_i u^a_i + \lambda_i G_i x^a_i+  \hat{f}_i(x_i)-\hat{f}_i(x_i+\lambda_i x^a_i) + \\& \sum_{j\in \mathcal{N}_i}a_{ij}(\lambda_i x^a_i 
	-\phi^i_j \bar{x}                ^a_j ) + b_i(\lambda_i x^a_i ).
	\end{split}
	\end{equation}
	
	The response  of  residual  signal (\ref{eq:25}) under attack when $x_i(0) = \hat{x}_i(0)$ can be presented as
	\begin{equation}\label{eq:47}
	\tilde{x}_i = \sum^{k-1}_{l=0}G_i^{k-l-1}(\Tilde{W}_i^{T} \varphi(x_i ) + e_i + \epsilon_i  + w_i  + s_i ).
	\end{equation}
	The following Theorem determines the detectability condition.
	
	\begin{theorem}
		Consider the $i$-th agent system model described by (\ref{eq:2}) and the NN observer (\ref{eq:20}). The agent can detect the attacks if the overall attacks effect  $s_i$ satisfies
		\begin{equation}\label{eq:48}
		\begin{split}
		||\sum^{k-1}_{l=0}&G_i^{k-l-1}s_i || \geq \pi + \\&
		||\sum^{k-1}_{l=0}G_i^{k-l-1}(\Tilde{W}_i^{T} \varphi(x_i ) + e_i + \epsilon_i  + w_i )||.
		\end{split}
		\end{equation}
		
	\end{theorem}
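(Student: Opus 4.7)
The plan is to start from the closed form of the residual under attack, equation (\ref{eq:47}), and split it into an attack-free part and an attack-induced part, then apply the reverse triangle inequality to obtain a lower bound on $\|\tilde x_i\|$ that can be compared against the detection threshold $\pi$ established in Theorem \ref{theorem:2}.

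Concretely, I would first write
\begin{equation*}
\tilde x_i = \underbrace{\sum_{l=0}^{k-1} G_i^{k-l-1}\bigl(\tilde W_i^T \varphi(x_i) + e_i + \epsilon_i + w_i\bigr)}_{=: \tilde x_i^{\text{free}}} + \underbrace{\sum_{l=0}^{k-1} G_i^{k-l-1} s_i}_{=: \tilde x_i^{\text{atk}}},
\end{equation*}
using linearity of the convolution sum together with the model assumption $x_i(0)=\hat x_i(0)$, which makes (\ref{eq:47}) exact and removes any transient from initial conditions. The term $\tilde x_i^{\text{free}}$ is precisely what drives the residual in the attack-free regime analyzed in Theorem \ref{theorem:2}, while $\tilde x_i^{\text{atk}}$ collects all contributions of $s_i$ defined in (\ref{eq:46}).

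The second step is to invoke the reverse triangle inequality
\begin{equation*}
\|\tilde x_i\| \;\geq\; \bigl\|\tilde x_i^{\text{atk}}\bigr\| - \bigl\|\tilde x_i^{\text{free}}\bigr\|.
\end{equation*}
Substituting the hypothesis (\ref{eq:48}) into the right-hand side gives $\|\tilde x_i\| \geq \pi$, which, by the definition of the residual-based detector with threshold $\pi$ derived after Theorem \ref{theorem:2}, is exactly the condition that triggers an alarm at the $i$-th agent. This establishes detectability.

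The main technical subtlety, rather than a deep obstacle, is to justify that $\pi$ is a legitimate per-agent threshold even though it was derived in Theorem \ref{theorem:2} for the stacked observer error $\tilde{\mathbf{x}}$; this is handled by the norm inequality $\|x\|_\infty \leq \|x\|_2$ already used in the text preceding the statement, so $\|\tilde x_i\| \leq \|\tilde{\mathbf{x}}\| \leq \pi$ in the attack-free case, and crossing $\pi$ is a sufficient detection criterion at the agent level. A secondary care point is the sign/direction of the triangle inequality: one must lower-bound the attack contribution by the norm of the attack-free residual term rather than by its bound in Theorem \ref{theorem:2}, which is why (\ref{eq:48}) is stated with the full norm of the nuisance convolution on the right-hand side; this makes the condition tight and the implication immediate.
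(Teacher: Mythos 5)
Your proposal is correct. The paper does not actually write out a proof of this theorem---it only points to a reference---and your argument (splitting the convolution sum in (\ref{eq:47}) into the attack-free nuisance term and the attack term, applying the reverse triangle inequality to get $\|\tilde x_i\| \geq \|\sum_{l} G_i^{k-l-1} s_i\| - \|\sum_{l} G_i^{k-l-1}(\tilde W_i^T\varphi + e_i + \epsilon_i + w_i)\| \geq \pi$, and invoking the per-agent validity of the threshold $\pi$) is precisely the standard argument that the cited work uses, so it correctly supplies the omitted details.
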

	\vspace{0.2cm}
	\begin{proof}
		Similar proof can be found in \cite{niu2019attack}.
	\end{proof}
	\color{black}
	
	\begin{remark}\label{remark:2}
		The attack effect $s_i$ can be due to any type of attacks mentioned earlier including some combination of them.
	\end{remark}
	
	\begin{remark}
		Note that if $d_i =0$ for $i=1,...,N$ in equation (\ref{eq:6}), the formation control setup becomes a consensus problem \cite{oh2015survey}, where the proposed attack detection method can still be applied.
	\end{remark} 

	
	\begin{figure*}[]\label{fig:2}
		\centering 
		\begin{subfigure}{0.32\textwidth}
			\includegraphics[width=\linewidth,height=3.9cm]{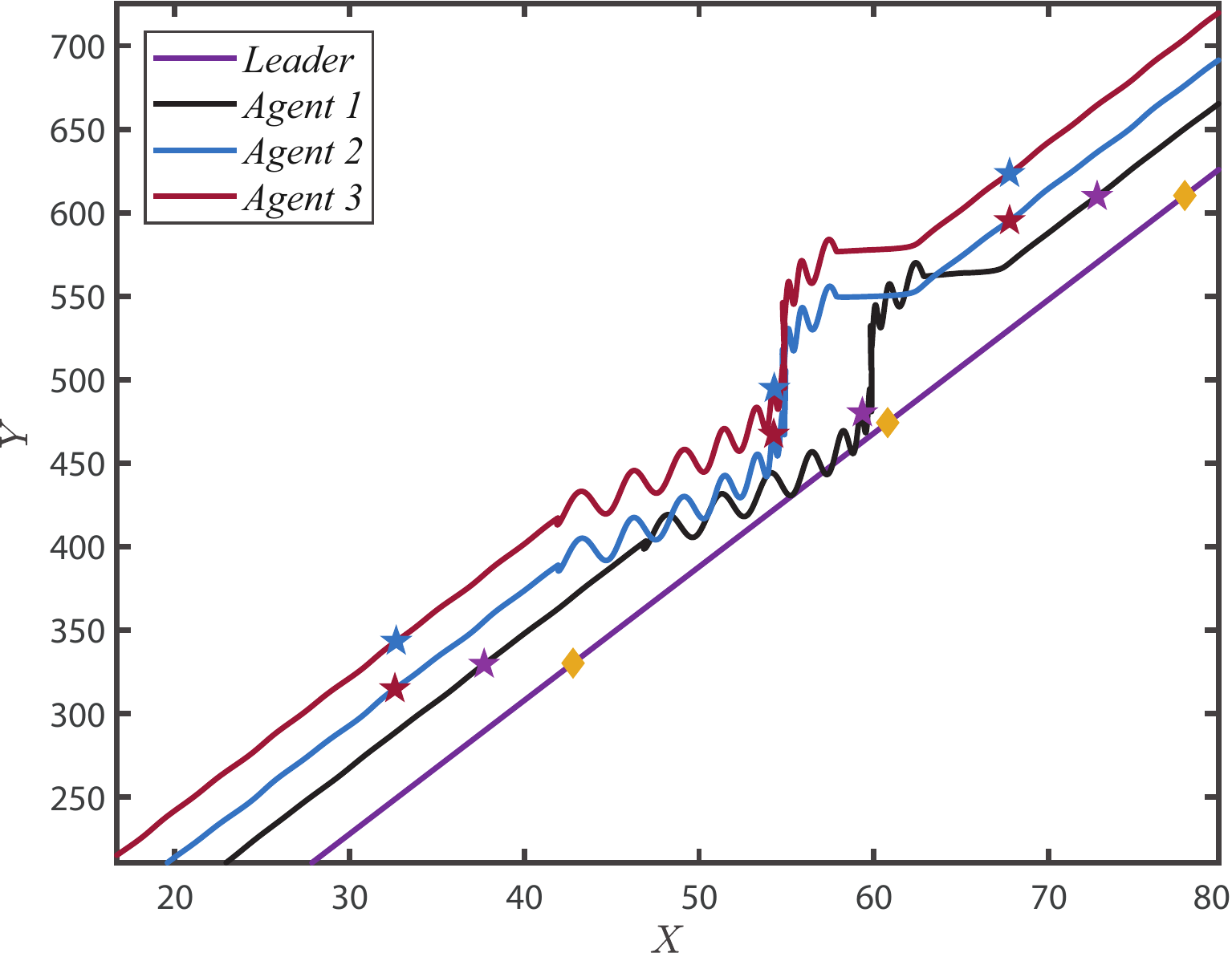}
			\caption{}
			\label{fig:8a}
		\end{subfigure}\hfil 
		\begin{subfigure}{0.32\textwidth}
			\includegraphics[width=\linewidth,height=3.9cm]{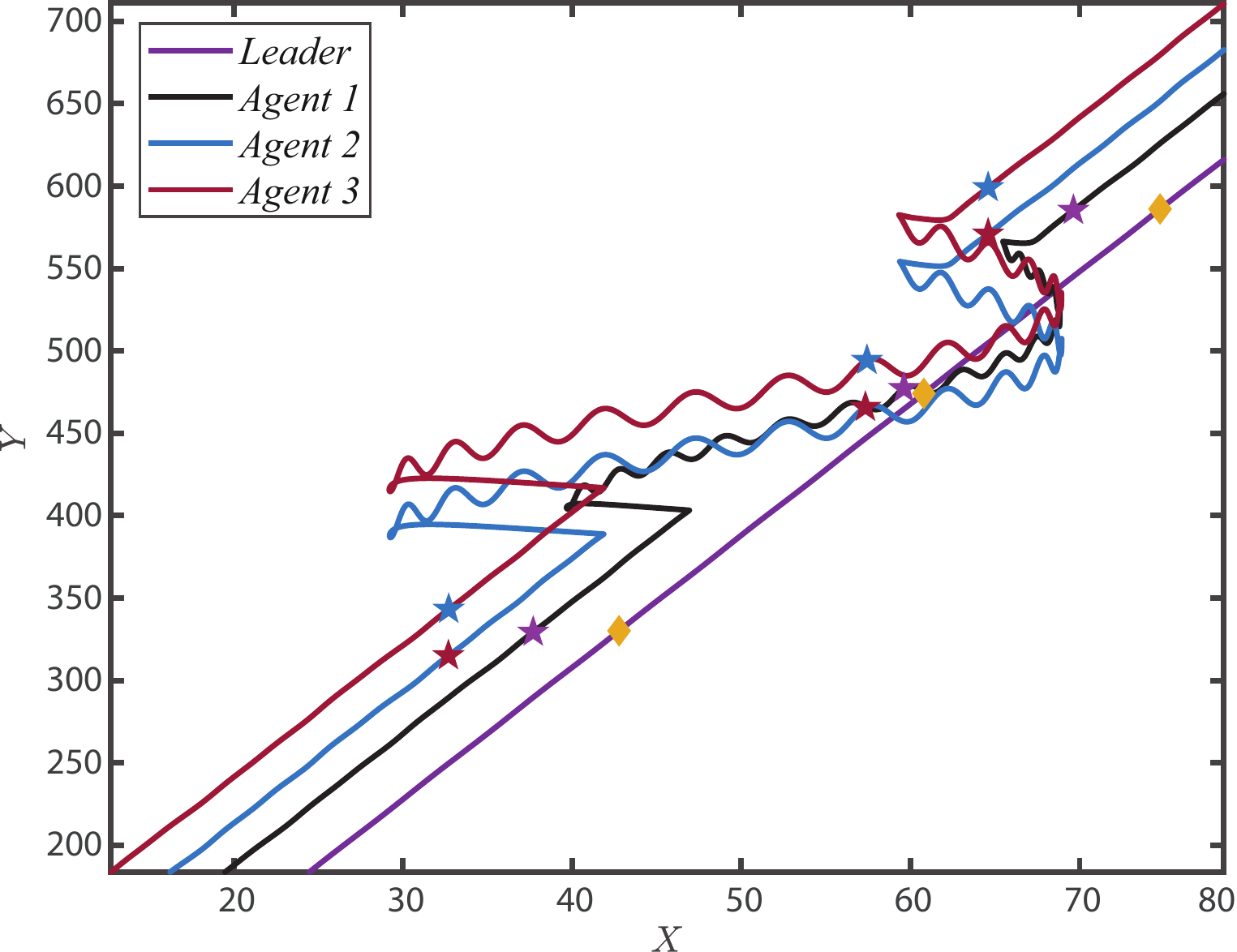}
			\caption{}
			\label{fig:8b}
		\end{subfigure}\hfil 
		\begin{subfigure}{0.32\textwidth}
			\includegraphics[width=\linewidth,height=3.9cm]{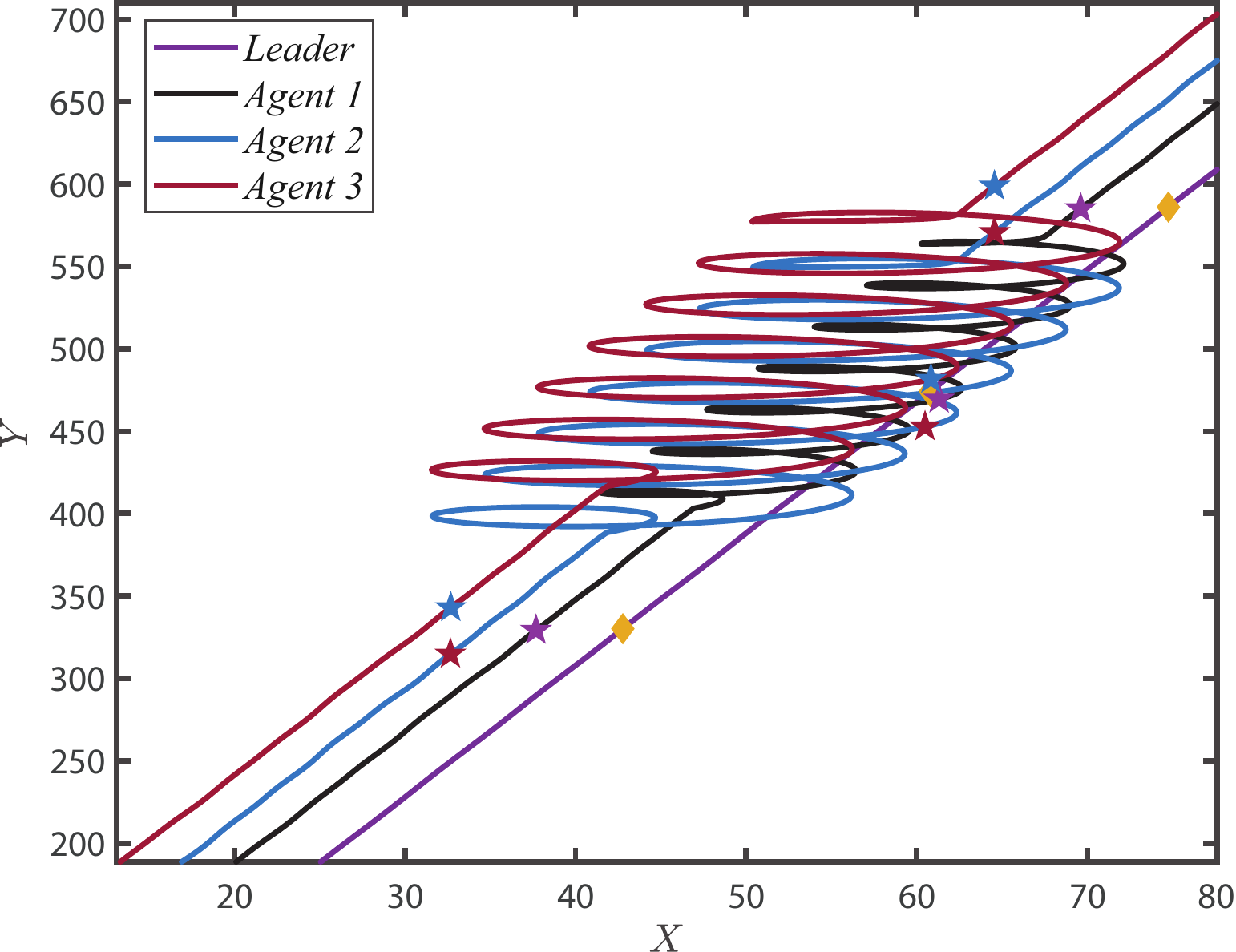}
			\caption{}
			\label{fig:8c}
		\end{subfigure}
		
		\caption{System trajectory under attack. (a) Attack on actuator channel of agent 1. (b) Attack on sensor channel of agent 3. (c) Attack on the neighbouring channel between agent 2 and agent 3.}
		\label{fig:8}
	\end{figure*}
	
	\section{Simulation Results}\label{section:5}
	
	We conducted simulations to evaluate the proposed detection method's performance and to show that the proposed detection method meets the stated objective, thus enabling each agent to detect attacks on its communication channels. 
	
	\textit{Example 1}: Consider a multi-agent system with the leader and three agents as followers, modelled by directed graph Fig. 3.
	\begin{equation}\label{eq:49}
	x_i ^{+} = \begin{bmatrix}
	\frac{x_{i2}}{1+x^2_{i2}}\\
	\frac{x_{i1}}{1+x^2_{i2}}
	\end{bmatrix} + u_i + w_i,
	\end{equation}
	where $x_i = [x_{i1}~x_{i2}]^T$. The variable $x_{i1}$ and $x_{i2}$ are the position in $x$ and $y$ dimension, respectively. Select the control parameter $k_i = 0.2I_2$, $c = 0.7$, and the observer gain as $G_i = 0.23I_2$. The sampling period is chosen as $T = 1~ms$. The initial conditions for the follower agents are: $x_1(0)=(1, -1)^T$, $x_2(0)=(3,4)^T$, and $x_3(0)=(3,-5)^T$. The disturbance for each agent is considered as follows
	\begin{equation}\label{eq:50}
	\begin{split}
	& w_1= [0.01,0.05]^T sin(2t),\\
	& w_2= [0.02,0.05]^T cos(3t),\\
	& w_3= [0.02,0.01]^T sin(3t).
	\end{split}
	\end{equation}
	
	\begin{figure}[tb]\label{fig:3}
		\centering
		\includegraphics[scale=0.50]{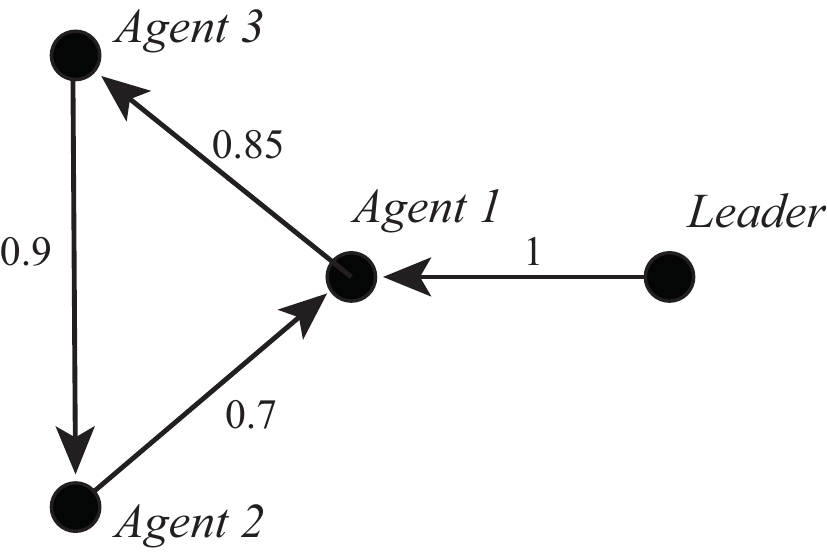}
		\caption{Communication topology and formation shape.}
		\label{fig:universe1}
	\end{figure} 
	
	RBFNN is selected with 9 neurons with centers $m_j$ evenly spaced for each agent and $F_i = 0.1I_{9}$, $\alpha = 0.1$. 
	In the simulation, we test our proposed method for three types of attacks to show the detection system performance. We consider three scenarios of injecting false data to actuator channel, sensor channel, and neighbouring channel. The detection system shows that it is capable of detecting these attacks separately; therefore, it can detect the combination of the mentioned attacks. The attack threshold (\ref{eq:43}) is $\pi = 0.44$, for the attack-free case. The desired formation and communication topology is shown in Fig. 3, and the desired position of each agent with respect to the leader is given as follows:
	\begin{equation}\label{eq:51}
	\begin{split}
	& d_1: [5,~0]^T, d_2: [10,~14]^T,  d_3: [-10,~14]^T.
	\end{split}
	\end{equation}

	\begin{figure}[t]\label{fig:4}
		\centering
		\includegraphics[scale=0.45]{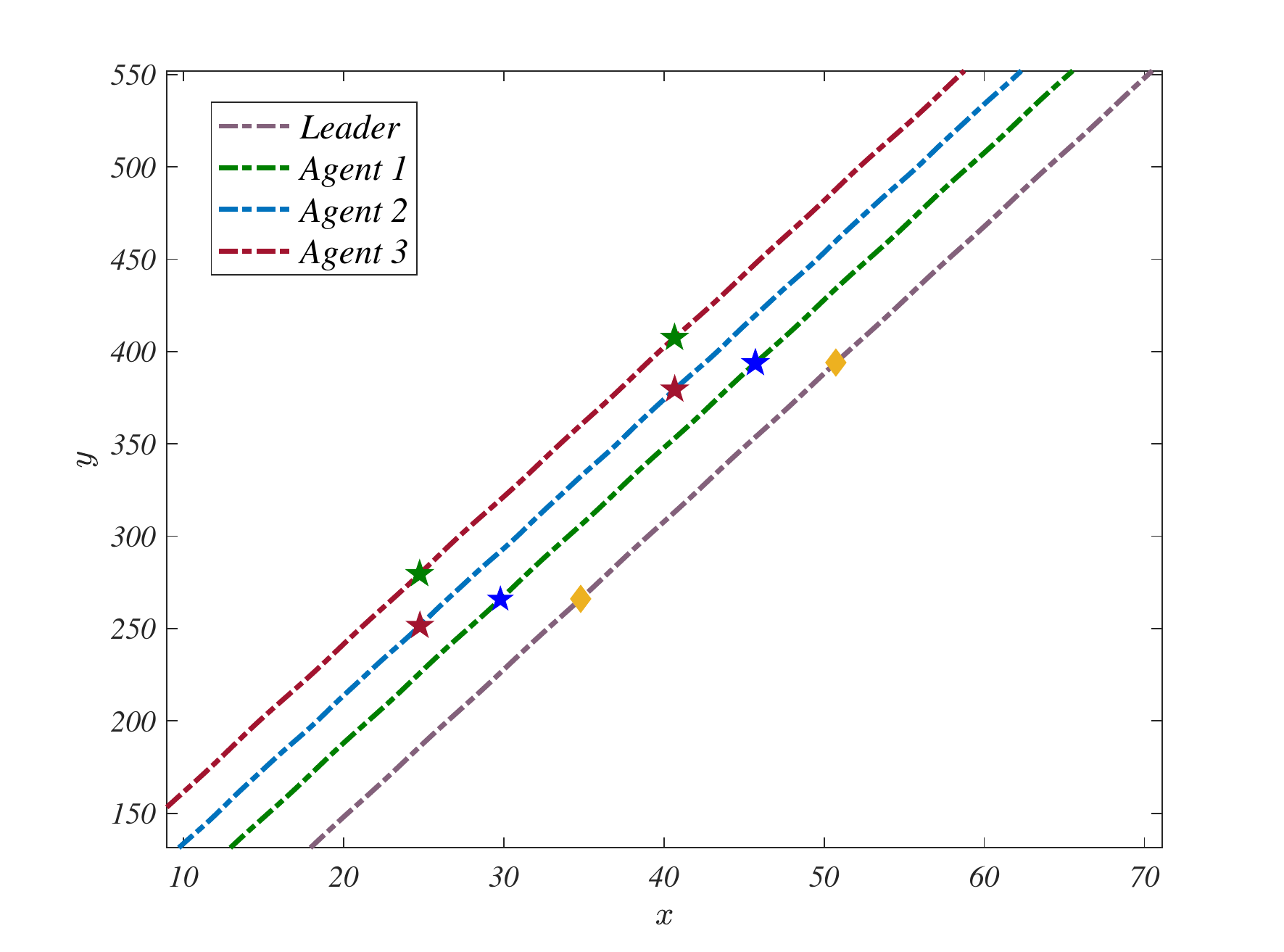}
		\caption{System trajectory in attack-free case.}
		\label{fig:universe2}
	\end{figure}  
	\vspace{0.2cm}
	\noindent \textbf{Case 1: Attack on actuator channel}

	For the first case, we consider the following leader trajectory 
	\begin{equation}\label{eq:52}
	\begin{split}
	& x_{l1} = t+2,\\
	& x_{l2} = 8t+4,
	\end{split}
	\end{equation}
	and we assume the attacker injects a false data on actuator channel of agent 1:
	\begin{equation}\label{eq:53}
	u^c_{1}  = u_1 + \kappa_3  u^a_1,
	\end{equation}
	where $\kappa_1  = 1$ when $50 \leq t\leq 70$ and $u^a_1  = [2sin(t/4),3sin(4t)]^T$.

	In Fig. 4, the system formation is shown for the attack-free case, and in Fig. 2(a), the effect of attack can be seen on the system formation. The residual signal of agent 1, which is $\tilde{x}_1$ with the threshold $\pi$ is shown in Fig. 5, which shows the attack increases the residual signal of agent 1 and it exceeds the threshold.
	
	\vspace{0.2cm}
	
	\noindent \textbf{Case 2: Attack on sensor channel}
	
	We consider that trajectory (\ref{eq:52}) for the leader, and we assume the attacker injects false data on the sensor channel of agent 3:
	\begin{equation}\label{eq:54}
	x^c_{3}  = x_3  + \lambda_3  x^a_3,
	\end{equation}
	where $\lambda_3  = 1$ when $50 \leq t \leq 70$ and $x^a_3  = [4sin(t/4),5sin(5t)]^T$.

	In Fig. 2(b), the effect of the attack on agent 1 sensor channel can be seen. The residual signal of agent 3 is $\tilde{x}_3$ increases in the attack duration and reveals the attack (Fig. 6).

	\vspace{0.2cm}
	\noindent \textbf{Case 3: Attack on neighbouring channel}
	
	We consider that the leader trajectory in (\ref{eq:52}), and we assume the attacker injects false data on the neighbouring channel of agent 2. We consider that the attacker injects the false data into the communication channel between agent 3 and agent 2, which means the attacker changes the data that agent 2 receives from agent 3:
	
	\begin{equation}\label{eq:55}
	\bar{x}^c_{3}  = \bar{x}_3  + \phi^2_3  \bar{x}^a_3,
	\end{equation}
	where $\phi^2_3  = 1$ when $50 \leq t \leq 70$ and $\bar{x}^a_3  = [-4sin(t), 3cos(t)]^T$.

	Fig. 2(c) shows the effect of the attack on the neighbouring channel of agent 2 is shown, and the residual signal of agent 2 is shown in Fig. 7, which shows the detection of the attack.

	\begin{figure}[t]\label{fig:5}
		\centering
		\includegraphics[scale=0.45]{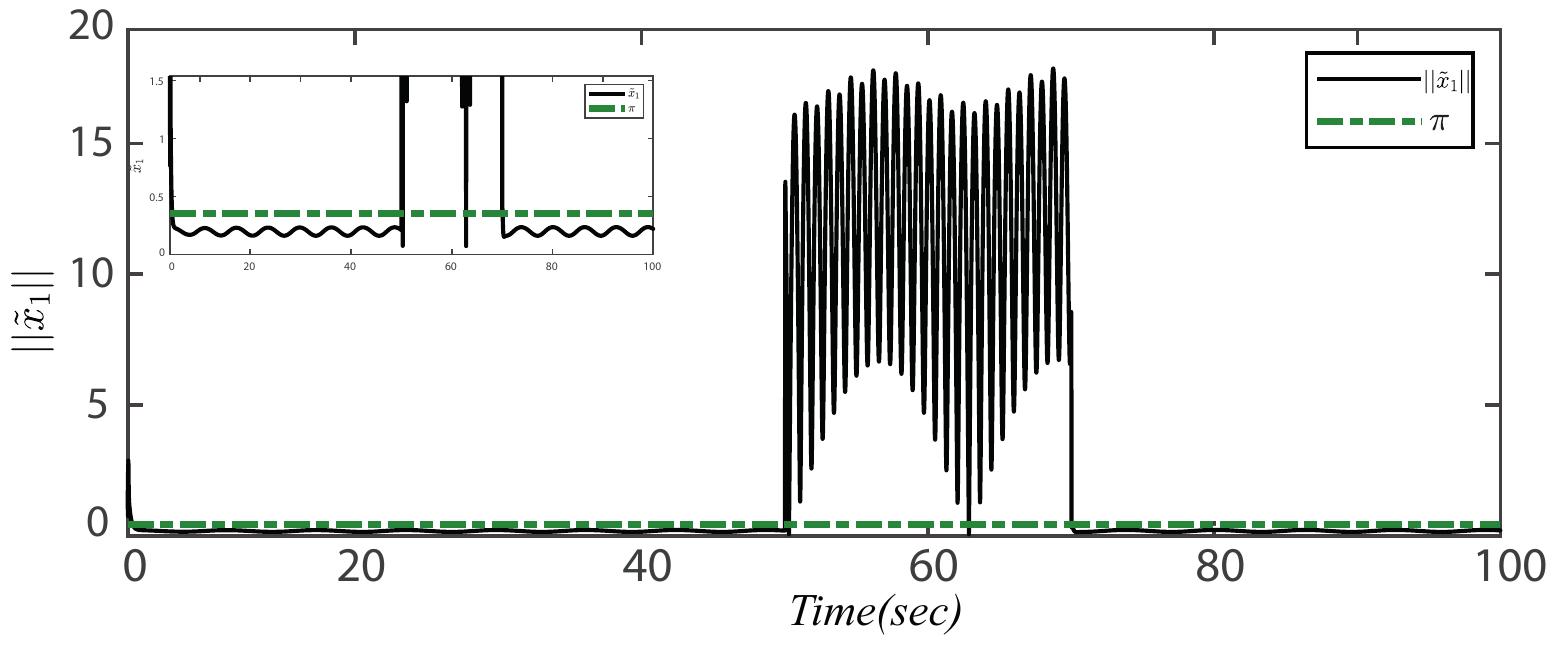}
		\caption{Residual signal under attack on actuator channel.}
		\label{fig:universe3}
	\end{figure}

	\begin{figure}[t]\label{fig:6}
		\centering
		\includegraphics[scale=0.5]{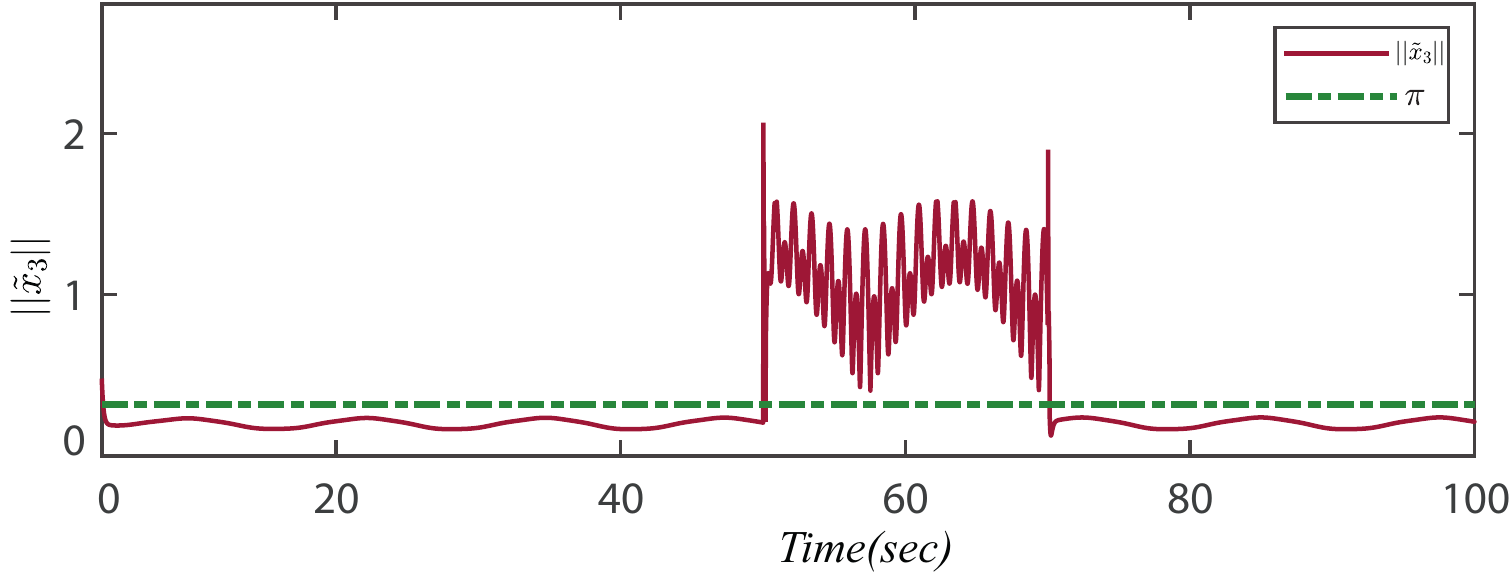}
		\caption{Residual signal under attack on sensor channel.}
		\label{fig:universe4}
	\end{figure}

	\begin{figure}[t]\label{fig:7}
		\centering
		\includegraphics[scale=0.5]{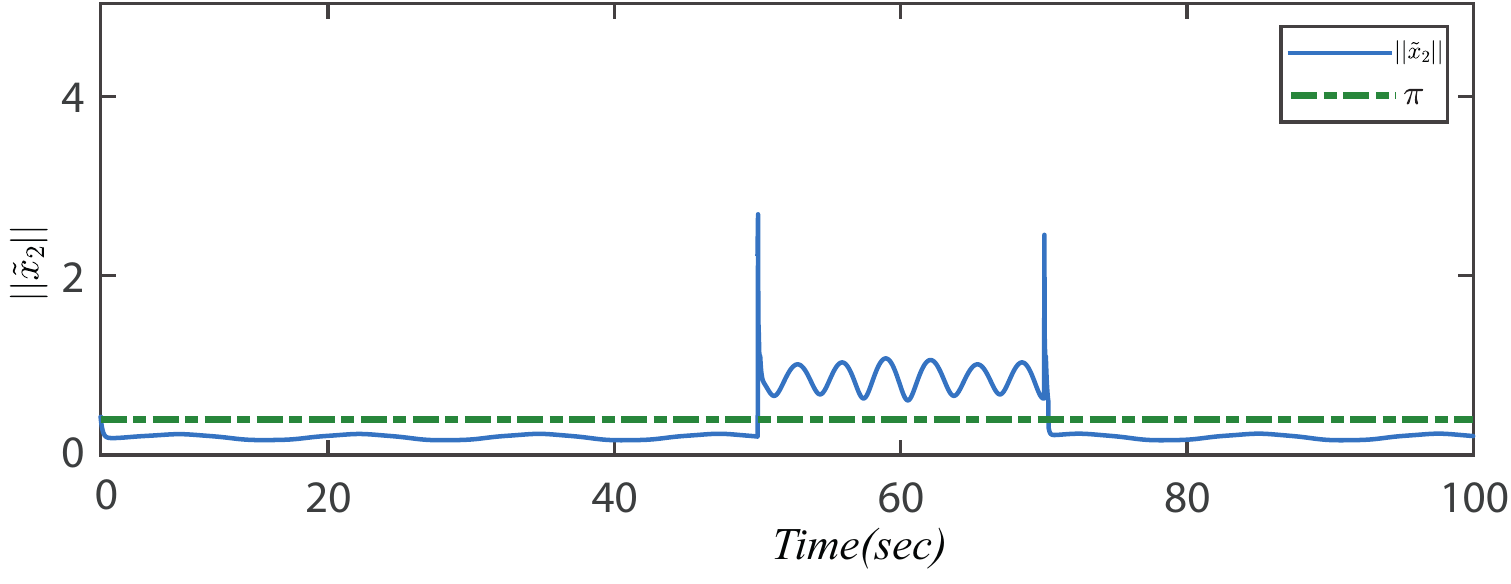}
		\caption{Residual signal under attack on neighbouring channel.}
		\label{fig:universe5}
	\end{figure}


	
	\section{Conclusion}
	We developed a cyber-attack detection system for a class of discrete, nonlinear, heterogeneous, multi-agent systems in a formation control setting. We proposed a formation control for a class of nonlinear discrete multi-agent systems. We investigated the false data injection attack into agents' communication channels, where a NN-based attack detection on the actuator, sensor, and neighbouring communication channel was developed. The Lyapunov stability analysis was used to prove UUB of formation error, NN weigh matrix error, and observer error. The simulation results were presented and included examples of attacks on communication between neighbors, sensors, and actuators, showing the functionality and performance of the attack detection system and NN-based observer.

	\section{Appendix}
	\textit{Proof of Theorem 1:}
	Let us consider the following Lyapunov candidate function
	
	\begin{equation}\label{eq:56}
	V = \frac{1}{\bar{\sigma}(\bar{L}^T\bar{L})}e^Te + \frac{1}{\alpha}tr(\tilde{W}^T\tilde{W)}.
	\end{equation}
	
	Let us define $\nu = \underline{1} \otimes f_l(x_l) + d$. Based on the Assumption 2, the $\nu$ is bounded by 
	\begin{equation}
	||\nu|| \leq \nu_M,
	\end{equation}
	with $\nu_M = F_M + d_M$.

	Therefore, from equations (\ref{eq:9-1}) and (\ref{eq:26}) the  first  difference of  Lyapunov  candidate  is  defined  as follows
	
	\begin{equation}\label{eq:57}
	\small
	\begin{split}
	\Delta V = & -(2 - \alpha\varphi^T\varphi)\theta^T\theta - 2(1 - \alpha\varphi^T\varphi)\theta^T\mu  +\alpha\varphi^T\varphi\mu^T\mu\\& - \frac{1}{\alpha}[\gamma (2-\gamma )||\tilde{W}||_F^2-2\gamma (1-\gamma )W_M||\tilde{W}||_F-\gamma ^2W^2_M]\\& + 2\gamma  ||\varphi||W_M{\mu_M}
	- \frac{1}{\bar{\sigma}(\bar{L}^T\bar{L})}(1- c^2 P^TP)e^Te +\\& \frac{1}{\bar{\sigma}(\bar{L}^T\bar{L})}\theta^T\bar{L}^T\bar{L}\theta + \frac{1}{\bar{\sigma}(\bar{L}^T\bar{L})}(\mu + \nu)^T\bar{L}^T\bar{L}(\mu+\nu) \\& - \frac{2}{\bar{\sigma}(\bar{L}^T\bar{L})} c\theta^T\bar{L}^TPe
	- \frac{2}{\bar{\sigma}(\bar{L}^T\bar{L})} c (\mu^T+\nu^T)\bar{L}^TPe \\&+ \frac{2}{\bar{\sigma}(\bar{L}^T\bar{L})}(\mu^T+\nu^T)\bar{L}^T\bar{L}\theta,
	\end{split}
	\end{equation}
	recognizing the terms in (\ref{eq:57}) yields
	\begin{equation}\label{eq:58}
	\small
	\begin{split}
	\Delta V \leq & -(1 - \alpha\varphi^T\varphi)\theta^T\theta - 2(1 - \alpha\varphi^T\varphi)\theta^T\mu \\& +\alpha\varphi^T\varphi\mu^T\mu + 2\gamma\mu\theta - \frac{1}{\alpha}[\gamma (2-\gamma )||\tilde{W}||_F^2-\\&2\gamma (1-\gamma )W_M||\tilde{W}||_F-\gamma ^2W^2_M] + 2\gamma  ||\varphi||W_M{\mu_M}\\&
	- \frac{1}{\bar{\sigma}(\bar{L}^T\bar{L})}(1- c^2 P^TP)e^Te + (\mu + \nu)^T(\mu+\nu)\\& - \frac{2}{\bar{\sigma}(\bar{L}^T\bar{L})} c \theta^T\bar{L}^TPe - \frac{2}{\bar{\sigma}(\bar{L}^T\bar{L})} c(\mu^T+\nu^T)\bar{L}^TPe \\& + \frac{2}{\bar{\sigma}(\bar{L}^T\bar{L})}(\mu^T+\nu^T)\bar{L}^T\bar{L}\theta.
	\end{split}
	\end{equation}


	Completing the squares for $\theta$ and $\tilde{W}$, and considering   (\ref{eq:58}) one can have:

	\begin{equation}\label{eq:59}
	\small
	\begin{split}
	\Delta V \leq & -(1 - \alpha\varphi^T\varphi)||\theta  - \frac{\gamma +\alpha\varphi^T\varphi }{1-\alpha\varphi^T\varphi}\mu - \frac{1}{1-\alpha\varphi^T\varphi}\nu+
	\\& 
	\frac{1}{\bar{\sigma}(\bar{L}^T\bar{L})(1 - \alpha\varphi^T\varphi)} c\bar{L}^TPe||^2
	- \frac{1}{\bar{\sigma}(\bar{L}^T\bar{L})}(1-\\& \eta  c^2\bar{\sigma}(P^T P)) [||e||^2 -\frac{\bar{\sigma}(\bar{L}^T\bar{L})}{1-\eta  c^2\bar{\sigma}(P^T P)}( 2\Lambda_1||e|| + \Lambda_2)]  \\&-\frac{1}{\alpha}\gamma (2-\gamma ) \bigg[||\tilde{W}||_F -\frac{1-\gamma }{2-\gamma }W_M  \bigg]^2,
	\end{split}
	\end{equation}
	where $\Lambda_1$ and $\Lambda_2$ are  given as follows: 
	
	\begin{equation}\label{eq:60}
	\small
	\begin{split}
	\Lambda_1 = & \frac{1}{\bar{\sigma}(\bar{L}^T\bar{L})} c\bar{\sigma}(P)\bar{\sigma}(\bar{L}^T)\big[(\frac{\gamma +1}{1-\alpha\varphi_M^2})\mu_M + \frac{2-\alpha}{1-\alpha\varphi_M^2}\nu_M\big],
	\end{split}
	\end{equation}
	and,
	\begin{equation}\label{eq:61}
	\begin{split}
	\Lambda_2 =& 2\gamma W_M\mu_M+\frac{1}{\alpha}\frac{\gamma }{2-\gamma }W^2_M+ (-2\gamma+\\& \frac{(1+\gamma )^2}{1-\alpha\varphi_M^2})\mu^2_M + \frac{2(1+\gamma)}{1-\alpha\varphi_M^2}\mu_m\nu_M + \frac{2-\alpha}{1-\alpha\varphi_M^2}\nu^2_M.
	\end{split}
	\end{equation}
	
	Then $\Delta V < 0$ as long as (\ref{eq:28})-(\ref{eq:30}) hold, and the quadratic term for $e$ in (\ref{eq:59}) is positive which is guaranteed when
	
	\begin{equation}\label{eq:62}
	\begin{split}
	||e|| \geq & \frac{\Lambda_1+\sqrt{\Lambda^2_1+(1-\eta c^2\bar{\sigma}(P^T P))\Lambda_2}}{(1-\eta c^2\bar{\sigma}(P^T P))}\triangleq e_M.
	\end{split}
	\end{equation}
	By using (\ref{eq:58}) and completing the squares for $e$ we can conclude that $\Delta V < 0$ as long as

	\begin{equation}\label{eq:63}
	\small
	||\tilde{W}||_F \geq \frac{\gamma(1-\gamma )W_M+\sqrt{\gamma^2(1-\gamma )^2W^2_M+\gamma(2-\gamma )\xi}}{\gamma(2-\gamma)},
	\end{equation}
	where $\xi$ is

	\begin{equation}\label{eq:64}
	\small
	\begin{split}
	\xi = & {2\alpha\gamma W_M\mu_M} + \gamma^2 W^2_M+(-2\alpha\gamma+ \frac{\alpha(1+\gamma )^2}{1-\alpha\varphi_M^2})\mu^2_M +\\& \frac{2\alpha(1+\gamma)}{1-\alpha\varphi_M^2}\mu_m\nu_M + \frac{\alpha(2-\alpha)}{1-\alpha\varphi_M^2}\nu^2_M +\frac{\bar{\sigma}(\bar{L}^T\bar{L})}{1-\eta c^2\bar{\sigma}(P^T P)}\Lambda^2_1.
	\end{split}
	\end{equation}
	
	In general $\Delta V \leq 0$ in a compact set as long as (\ref{eq:28}) through (\ref{eq:30}) are satisfied and either (\ref{eq:62}) or (\ref{eq:63}) holds. According to a standard Lyapunov extension theorem \cite{sarangapani2018neural}, this demonstrates that the formation error and weight estimates error are UUB. Lemma 1 shows that the tracking error vector $\delta(t)$ is UUB. That means all agents follow the leader while maintaining the desired formation.
	



	\bibliographystyle{IEEEtran}
	\bibliography{Journal.bib}

\end{document}